\begin{document}

\newcommand{\ket}[1]{\left | #1 \right\rangle}
\newcommand{\bra}[1]{\left \langle #1 \right |}
\newcommand{\half}{\mbox{$\textstyle \frac{1}{2}$}}
\newcommand{\smallfrac}[2][1]{\mbox{$\textstyle \frac{#1}{#2}$}}
\newcommand{\braket}[2]{\left\langle #1|#2\right\rangle}
\newcommand{\proj}[1]{\ket{#1}\bra{#1}}
\renewcommand{\epsilon}{\varepsilon}

\newcommand{\cE}{\mathcal{E}}
\newcommand{\cD}{\mathcal{D}}
\newcommand{\cM}{\mathcal{M}}
\newcommand{\ie}{{\it i.e.}}

\newtheorem{remark}{Remark}
\newtheorem{lemma}{Lemma}
\newtheorem{theorem}{Theorem}
\newtheorem{example}{Example}
\newtheorem{definition}{Definition}
\newtheorem{proposition}{Proposition}
\newtheorem{corollary}{Corollary}

\title[Separable states improve protocols with finite randomness]{Separable states improve protocols with finite randomness}

\author{Tan Kok Chuan Bobby$^{1}$
and Tomasz Paterek$^{1,2}$}

\address{
$^1$ Centre for Quantum Technologies, National University of Singapore, Singapore \\
$^2$ School of Physical and Mathematical Sciences, Nanyang Technological University, Singapore}

\date{\today}

\begin{abstract}
It is known from Bell's theorem that quantum predictions for some entangled states cannot be mimicked using local hidden variable (LHV) models.
From a computer science perspective,  LHV models may be interpreted as classical computers operating on a potentially infinite number of correlated bits originating from a common source.
As such, Bell inequality violations achieved through entangled states are able to characterise the quantum advantage of certain tasks, so long as the task itself imposes no restriction on the availability of correlated bits.
However, if the number of shared bits is limited, additional constraints are placed on the possible LHV models and separable, i.e. disentangled, states may become a useful resource. Bell violations are therefore no longer necessary to achieve a quantum advantage.
Here we show that in particular, separable states may improve the so-called random access codes, which is a class of communication problems where one party tries to read a portion of the data held by another distant party in presence of finite shared randomness and limited classical communication.
We also show how the bias of classical bits can be used to avoid wrong answers in order to achieve the optimal classical protocol and how the advantage of quantum protocols is linked to quantum discord.
\end{abstract}

\maketitle

\section{Introduction}

Quantum communication typically study the efficiency of tasks in which either quantum bits are communicated between distant parties, or classical bits are communicated but the parties involved share some quantum correlations.
Many problems can be efficiently solved in this setting and examples include cryptography~\cite{PhysRevLett.67.661}, communication complexity~\cite{buhrman2010}, or computation~\cite{PhysRevLett.102.050502}.
Instances of these problems draw their superiority from the violation of Bell inequalities which require entanglement and the pay off is that such states outperform all classical-like solutions characterised by local hidden variable models.
In the language of computer science, local hidden variable models are models of computation where local classical computers execute algorithms based on input from a potentially unlimited source of random bits.
Here we point out new classes of states, correlated in a quantum way but not necessarily entangled, that may improve quantum protocols if the randomness shared between distant parties consists of a \emph{finite} number of classical or quantum bits.
We prove this rigorously for the task called random access code~\cite{wiesner1983,ambainis1999,ambainis2002} assisted by two bits of randomness. However, due to the general nature of the argument, we expect that a similar reasoning will apply to problems of a similar nature (i.e. where shared randomness is an expensive resource).
Such a restriction is not a limitation of our computing model, as even the universe does not have access to an unlimited number of bits~\cite{PhysRevLett.88.237901}.
Deriving limits on classical computation and communication that take finite randomness into account is therefore not only of practical interest, but may also shed light on fundamental questions.

The present work also contributes a new operational meaning to certain measures of non-classical correlations.
Many quantum states that are not entangled, so called separable states, still posses non-classical features such as those characterised by quantum discord~\cite{henderson2001,ollivier2001,celeri2011,modi2012}.
The role of quantum discord in communication problems was quite extensively studied and connections were established with entanglement transformations~\cite{cavalcanti2011,madhok2011,streltsov2011,piani2011,streltsov2012,chuan2012,modi2012},
coherence of protocols~\cite{madhok2013}, as well as with the performance of certain problems that can be directly compared to their classical counterparts~\cite{boixo2011,dakic2012,horodecki2013}.
However, the latter link with the discord is established only for classical-quantum states~\cite{boixo2011} or for problems with additional constraints such as the lack of certain reference frames~\cite{dakic2012,horodecki2013}.
It is therefore desirable to identify a well-known communication problem with many applications, that can gain efficiencies from discorded states.

In this context, studying random access codes assisted with finite randomness is a natural choice.
Indeed, a quantum version of this problem is as old as quantum information itself~\cite{wiesner1983,ambainis1999,ambainis2002}. The quantum codes were studied in general probabilistic theories~\cite{versteeg2009}, in relation with Popescu-Rohrlich boxes~\cite{grudka2013}, led to information causality~\cite{ic}, find applications in quantum finite automata~\cite{ambainis2002}, quantum communication complexity~\cite{klauck2001}, network coding~\cite{hayashi2006}, security of quantum-key distribution~\cite{li2012}, and have been recently demonstrated experimentally~\cite{spekkens2009}.
Assuming restrictions on shared randomness, we will show that not only do separable, discorded states allow better performance than the best classical solution, they also outperform some entangled states.


\subsection{Random access codes}

Imagine that Bob would like to know (better than just by sheer guess) a random number from Alice's telephone book.
Is it necessary for Alice to send Bob the whole book? Or perhaps she can communicate a fewer number of ``encoded'' pages such that Bob is reasonably confident of getting the correct number?
Random access codes are strategies designed to solve this problem.
As illustrated in Fig.~\ref{FIG_RAC}, in a classical $n \rightarrow 1$ random access code (RAC) Alice receives a random $n$-bit input $x$, communicates a single bit $c$ to Bob, who given this piece of information tries to guess the $i$th bit of Alice, $x_i$, by outputting his guess $b_i$ (in every run $i$ is chosen at random).
 One may construct quantum versions of this task by either having Alice communicate a single quantum bit, or by having Alice and Bob share an entangled quantum state aided by a single bit of classical communication~\cite{pawlowski2010}.
We study here the latter version of the problem and allow for arbitrary quantum states in place of just entangled ones.
The role of quantum discord in the former version of the problem was considered in Ref.~\cite{yao2012}.
Our choice makes the relevance of shared randomness more transparent as by restricting the communication to classical the only additional resource facilitating the process are the assisting (qu)bits.

\begin{figure}
\qquad \qquad \qquad \includegraphics[scale=0.5]{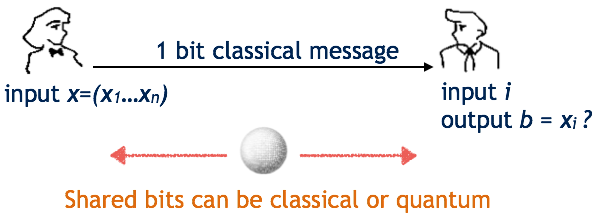}
\caption{$n \to 1$ random access codes with shared randomness.
Alice and Bob share a finite number of classical or quantum bits from a common source (shared randomness).
Alice is allowed to send a single classical bit to Bob, who tries to guess the $i$th bit, $x_i$, of Alice's input string.
We show that sharing quantum bits in separable states improves the worst-case probability of Bob's correct guess over the best classical protocol.}
\label{FIG_RAC}
\end{figure}

The existing quantum codes use a finite number of qubits and are effectively compared with classical protocols with unlimited shared randomness~\cite{ambainis2008,pawlowski2010}.
Under such comparison, the quantum code can outperform the classical ones only if it is assisted by quantum states violating some Bell inequality.
This is because all the states that admit a local hidden variable model (all separable states and some entangled ones, e.g.~\cite{werner1989,PhysRevA.73.062105,almeida2007,PhysRevLett.111.160402})
can be simulated with sufficient amount of shared randomness, bringing no gain to the quantum protocol.
However, if the size of the assisting resources is the same, states that do not violate any Bell inequality may possibly help improve the efficiency of the quantum protocol over the best classical ones.
We stress that this reasoning is not specific to random access codes, but also applies more generally to any task assisted with correlated resources.
This suggests that other correlation-assisted communication protocols can find a similar advantage using separable states.

We therefore restrict the amount of shared classical bits to be the same as the amount of shared quantum bits and study in detail the case of two assisting (qu)bits.
We first show how classical codes gain additional efficiencies by utilising the bias of the assisting bits to avoid wrong guesses.
Next, we provide quantum protocols assisted by separable states that outperform the best classical protocols, and show that in some cases they outperform even protocols assisted by quantum entanglement.


\section{Finite classical randomness}

A standard figure of merit characterising the efficiency of the RAC is the probability $P_{\min}$ of Bob's correct guess in the worst case scenario (minimised over $x$ and $i$), i.e. $P_{\min} = \min_{x,i} \mathrm{Pr}(b_i = x_i)$.
Let us recall that we are interested in the $n \to 1$ scenario with only one bit of classical communication from Alice to Bob and the input of Alice contains at least two bits, $n \ge 2$.
In addition to classical communication Alice and Bob may share assisting classical bits from a common source.
If no randomness is allowed in this scenario $P_{\min} = 0$, as there is always a bit that Bob guesses wrongly~\cite{ambainis2008}.
In the presence of shared randomness $r$, the efficiency $P_{\min}$ is additionally averaged over the assisting random bits, $P_{\min} = \min_{x,i} \sum_r p_r \mathrm{Pr}(b_i = x_i|r)$.
The following theorem characterises the maximal $P_{\min}$ in the presence of two bits of shared randomness.

\begin{theorem}
A classical $n \to 1$ RAC assisted with two bits from a common source has (i) $P_{\min} \le \frac{1}{2}$ if $n>2$; (ii) $P_{\min} \le \frac{2}{3}$ if $n=2$; (iii) $P_{\min} \le \frac{1}{2}$ for all $n > 1$ if the assisting bits have maximally mixed marginal for Bob.
\label{TH_MAX_MIXED}
\end{theorem}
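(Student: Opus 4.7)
The plan is to reduce to deterministic strategies for each realisation of the shared randomness and then extract what we need from a simple pairwise inequality.  For each value $r\in\{0,1,2,3\}$ one may assume, by convexity, that Alice's encoding is a deterministic function $f_r\colon\{0,1\}^n\to\{0,1\}$ and that Bob's decoded bit is a deterministic value $g_{r,c,i}\in\{0,1\}$, so that
\[
P(x,i)=\sum_{r=0}^{3}p_r\,\mathbf{1}\!\left[g_{r,f_r(x),i}=x_i\right].
\]
The decisive observation is that whenever $f_r(x)=f_r(x')$ and $x_i\ne x'_i$, Bob's outputs on $(x,i)$ and $(x',i)$ are forced to coincide while the correct bits are opposite, so the two indicator terms contributed by realisation $r$ sum to exactly one.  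Summing over $r$ yields, for every pair $x\ne x'$ with $x_i\ne x'_i$,
\[
P(x,i)+P(x',i)\;\le\;1+\mathrm{Pr}_r\!\left[f_r(x)\ne f_r(x')\right].
\]

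This pairwise inequality does most of the work for parts (i) and (iii).  For (i), $P_{\min}>1/2$ forces the right-hand side to exceed one for every distinct pair $x\ne x'$, so Alice's length-$4$ codeword $c(x):=(f_0(x),\ldots,f_3(x))\in\{0,1\}^4$ must be injective; since $|\{0,1\}^4|=16$ this already gives $n\le 4$ and disposes of $n\ge 5$.  For (iii), the uniform weights $p_r=1/4$ make every $P(x,i)$ a multiple of $1/4$, so $P_{\min}>1/2$ upgrades to $P_{\min}\ge 3/4$, which upgrades the right-hand side of the inequality to $\mathrm{Pr}_r[f_r(x)\ne f_r(x')]\ge 1/2$; the codewords then form a length-$4$ binary code of minimum distance $2$, whose size is bounded by the Singleton bound $2^n\le 2^{4-2+1}=8$, giving $n\le 3$.

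The remaining small-$n$ cases --- $n\in\{3,4\}$ in (i) and $n\in\{2,3\}$ in (iii) --- are the hard part of the proof.  I plan to close them by introducing the error vectors $e_r(x):=g_{r,f_r(x)}\oplus x\in\{0,1\}^n$ and combining two structural facts: (a)~$P_{\min}\ge 3/4$ requires that for every $x$ the supports of $e_0(x),\ldots,e_3(x)$ be pairwise disjoint subsets of $[n]$, since each index $i$ can be guessed incorrectly in at most one realisation; and (b)~whenever $f_r(x)=f_r(x')$ one has $e_r(x)\oplus e_r(x')=x\oplus x'$.  Together with the injectivity (or minimum-distance-$2$) constraint on the codewords, these leave only a finite, short case analysis ruling out every admissible code-plus-decoder configuration.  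Part (ii) is separate: with $n=2$ the achievable probability vectors $(P(x,i))\in[0,1]^8$ form a polytope with at most $256$ deterministic vertices, and maximising the worst-case coordinate under a four-vertex convex combination reduces to a small linear programme whose optimum equals $2/3$, verifiable either by exhibiting an optimal protocol together with a dual certificate (obtained from several instances of the pairwise inequality above) or by direct vertex enumeration.
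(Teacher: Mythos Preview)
Your proposal rests on a misreading of the model. In the paper, ``two bits from a common source'' means Alice receives one random bit $r_a$ and Bob receives one random bit $r_b$, drawn from an arbitrary joint distribution $p_{r_a r_b}$; neither party sees the other's bit. You instead treat the randomness as a single value $r\in\{0,1,2,3\}$ visible to both parties (two bits of \emph{common} randomness), which is a strictly more powerful resource. This changes everything. In the correct model Alice's encoding map $r_a\mapsto c(x,r_a)$ is a function from one bit to one bit, so there are only four possible encoding functions; hence for $n\ge 3$ two distinct inputs must share the same encoding function, Bob cannot distinguish them, and part~(i) follows by a one-line pigeonhole. Your route via length-$4$ codewords $c(x)\in\{0,1\}^4$ allows $16$ values precisely because you let Alice see both bits, and that is why you are left with the ``hard'' residual cases $n\in\{3,4\}$ that simply do not arise in the intended model.

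The misreading also breaks part~(iii): ``maximally mixed marginal for Bob'' means $\Pr(r_b{=}0)=\Pr(r_b{=}1)=\tfrac12$, not that the joint distribution is uniform. Perfectly correlated bits $p_{00}=p_{11}=\tfrac12$ satisfy the hypothesis, yet your granularity argument (every $P(x,i)$ a multiple of $1/4$, hence $P_{\min}>1/2\Rightarrow P_{\min}\ge 3/4$) fails for them. The paper instead notes that for the input $x$ assigned the constant encoding $c\equiv 0$, Bob's guess depends only on $r_b$, and a uniform $r_b$ forces some coordinate to be correct with probability exactly $\tfrac12$. Your plan for part~(ii) --- an LP over $256$ deterministic vertices with four-term mixtures --- is likewise posed in the wrong, stronger model; in the paper's setting Bob has only two decoding tables $g_{c,0},g_{c,1}$, and the analysis reduces to four points $P_1,\ldots,P_4$ and a short optimisation yielding $2/3$ together with an explicit attaining protocol.
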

\begin{proof}
(i) Let us denote the random bits of Alice and Bob by $r_a$ and $r_b$, respectively.
Alice's classical communication (encoding) is an output of a binary function $c = c(x,r_a)$, and Bob's guess of $x_i$ is also an output of a binary function $b_i = b_i(c,r_b)$.
Observe that for every given input $x$ Alice can only choose from the following four possible encoding functions:
1) $c = 0$ independently of $r_a$; 2) $c = 1$ independently of $r_a$; 3) $c = r_a$; 4) $c = 1 \oplus r_a$, where $\oplus$ denotes the binary sum.
Of course for different values of $x$ Alice can (and should) choose different encoding functions.
Indeed, we prove that if for two different inputs $x$ and $x'$ Alice uses the same encoding function, the probability of correct guess is no greater than $\frac{1}{2}$.
This is done via contradiction, for if Alice chooses for $x$ and $x'$ the same encoding function from the options above, her message $c$ is the same for both $x$ and $x'$.
Accordingly, since Bob is receiving the same message for both inputs, his guesses of the individual bits of $x$ and $x'$ for a given $r_b$ are the same and this implies that the probability of any given  guess is also the same (both for a fixed $r_b$ and averaged over the shared randomness).
Therefore, if his guesses are correct for the bits of $x$ with probability more than $\frac{1}{2}$, the guess of the differing bit of $x'$ must be incorrect with the same probability.
Hence, $P_{\min} \le \frac{1}{2}$.
Any sound strategy must therefore employ different encoding functions of Alice.
Since there are only four different such functions for a fixed input $x$, the efficiency is at most $\frac{1}{2}$ for all $n \ge 3$.
There is simply not enough shared randomness for Alice and Bob to do more.

(ii) We now focus on the $2 \to 1$ RAC.
In every protocol run, i.e. for a fixed $x$, Bob needs to prepare guesses $b_1$ and $b_2$ for the individual bits of Alice's input, which we order as $g_{c,r_b} = (b_1,b_2)$,
with indices $c, r_b$ describing the variables accessible to Bob.
Employing a method similar to Ref.~\cite{ambainis2002}, we define points $P(x) = (\mathrm{Pr}(b_1 = 1|x), \mathrm{Pr}(b_2 = 1|x))$ that represent the probabilities of  Bob's guesses being equal to $1$ for a fixed $x$.
Using the Bayes rule $\mathrm{Pr}(b_i = 1|x) = \sum_{r_a,r_b} p_{r_a r_b} \mathrm{Pr}(b_i = 1|r_a, r_b, x)$, and the fact that $r_a$, $r_b$, and $x$ deterministically specify $b_i$, i.e. $\mathrm{Pr}(b_i = 1|r_a, r_b, x) = b_i$, we
explicitly write the points corresponding to the four Alice's encoding functions listed above:
\begin{eqnarray}
    P_1(x^1) & = & p_{00} g_{0,0} + p_{01}  g_{0,1} + p_{10}  g_{0,0} + p_{11}  g_{0,1}, \label{EQ_Pa}\\
    P_2(x^2) & = &  p_{00}  g_{1,0} + p_{01}  g_{1,1} + p_{10}  g_{1,0} + p_{11} g_{1,1}, \label{EQ_Pb}\\
    P_3(x^3) & = & p_{00}  g_{0,0} + p_{01}  g_{0,1} + p_{10}  g_{1,0} + p_{11} g_{1,1}, \label{EQ_Pc}\\
    P_4(x^4) & = & p_{00}  g_{1,0} + p_{01}  g_{1,1} + p_{10}  g_{0,0} + p_{11} g_{0,1} , \label{EQ_Pd}
\end{eqnarray}
where $p_{kl} \equiv \mathrm{Pr}(r_a = k, r_b = l)$ is the distribution of the common source of randomness, $x^j$ denote the four different values of $x$ with index $j$ denoting the different encodings employed.
In order to achieve $P_{\min} > \frac{1}{2}$ all four guesses $g_{c,r_b}$ must be different for different values of $c$ and $r_b$.
Namely, in the decomposition of every point $P_j(x^j)$ above there must be a guess $g_{c,r_b} = x^j$, i.e. with the same individual bits as those of $x^j$.
If this is not the case then every guess $g_{c,r_b}$ contains at least one individual bit that is guessed wrongly.
Hence the probability of an individual bit being correct is equal to the probability of the other bit being a wrong guess, i.e. $P_{\min} \le \frac{1}{2}$.

We have shown that efficient codes must involve guesses $g_{c,r_b}$ with all different values for different $c$ and $r_b$.
We will now find the optimal strategy maximising $P_{\min}$ only for inputs $x^3$ and $x^4$ in Eqs.~(\ref{EQ_Pc}) and~(\ref{EQ_Pd}).
Since only two inputs are considered and the definition of $P_{\min}$ includes minimisation over all four inputs, this maximisation gives an upper bound on $P_{\min}$.
It will turn out that this upper bound is achieved.
In the best case, Bob never outputs a guess with both individual bits guessed wrongly.
Assume they are $g_{1,1}$ and $g_{0,1}$ in Eqs.~(\ref{EQ_Pc}) and~(\ref{EQ_Pd}), respectively.
Therefore, the best case corresponds to $p_{11} = 0$.
Since a guess giving the two bits of $x^3$ correctly must be different from the guess giving the two bits of $x^4$ correctly,
and the probability of guessing any individual bit is a sum of $p_{kl}$ corresponding to $g_{c,r_b} = x^j$ and $g_{c,r_b} = \bar x^j$ having the other individual bit flipped,
one may verify that
\begin{equation}
P_{\min} = \min(p_{00} + p_{01}, p_{00} + p_{10}, p_{01} + p_{10}).
\end{equation}
This is maximised for the biased distribution $p_{00} = p_{01} = p_{10} = \frac{1}{3}$, which implies that the optimal value is $P_{\min} = \frac{2}{3}$.
The optimal code, achieving $P_{\min} = \frac{2}{3}$, is detailed in Table~\ref{tab:proto} where Alice's encoding and Bob's output is completely specified.

\begin{table}
\begin{center}
\caption{\label{tab:proto} Details of the optimal $2 \to 1$ classical random access code assisted with two bits from a common source, $r_a$ and $r_b$.
These bits are distributed according to the biased distribution $p_{00} = p_{01} = p_{10} = \frac{1}{3}$ and $p_{11} = 0$, which is why below we do not present the case of $(r_a,r_b) = (1,1)$.
The guesses of Bob are denoted as $g_{c,r_{b}}$ and we note that for a given input $x$ they never contain both individual bits opposite to $x$.
In this sense the biased randomness is used to avoid giving wrong answers.
By comparing the individual bits of $g_{c,r_{b}}$ with the individual bits of $x$ one confirms that $P_{\min} = \frac{2}{3}$.}
\begin{tabular}{c c c c c c c c c}
\hline \hline
$x$ & & $(r_{a},r_{b})$ & & $c(x,r_{a})$ & & $g_{c,r_{b}}$ & & $P(x)$\tabularnewline
\hline

00 & & (0,0) & & 0 & & (0,1) & & $(\frac{1}{3},\frac{1}{3})$ \tabularnewline

 & & (0,1) & & 0 & & (0,0) & & \tabularnewline

 & & (1,0) & & 1 & & (1,0) & & \tabularnewline


\hline

01 & & (0,0) & & 0 & & (0,1) & & $(0,\frac{2}{3})$ \tabularnewline

 & & (0,1) & & 0 & & (0,0) & & \tabularnewline

 & & (1,0) & & 0 & & (0,1) & & \tabularnewline

\hline
10 & & (0,0) & & 1 & & (1,0) & & $(1,\frac{1}{3})$ \tabularnewline

 & & (0,1) & & 1 & & (1,1) & & \tabularnewline

 & & (1,0) & & 1 & & (1,0) & & \tabularnewline

\hline

11 & & (0,0) & & 1 & & (1,0) & & $(\frac{2}{3},\frac{2}{3})$ \tabularnewline

 & & (0,1) & & 1 & & (1,1) & & \tabularnewline

 & & (1,0) & & 0 & & (0,1) & & \tabularnewline

\hline \hline
\end{tabular}
\end{center}
\end{table}

(iii) Here we again utilize the fact that Bob's guesses $g_{c,r_b}$ must be different for different values of $c$ and $r_b$.
Since~(\ref{EQ_Pa}) involves the marginal distribution of Bob, the assumption of maximal mixedness gives $P_1(x^1) =\frac{1}{2} g_{0,0} + \frac{1}{2} g_{0,1}$, and
there is always an individual bit of $x^1$ that is guessed with probability $\frac{1}{2}$, thus $P_{\min} \le \frac{1}{2}$.
\end{proof}

We would like to emphasise that studies of randomness usually employ so-called ``common randomness'',
i.e. pairs of perfectly correlated and locally completely random bits,
whereas our proof shows that one can utilise the bias in the shared randomness to gain additional efficiency,
in this case to avoid giving wrong answers (see Table~\ref{tab:proto}).
%
%


\section{Finite quantum randomness}

Having established the classical bounds we proceed to demonstrate quantum protocols that exceed them.
We present explicit $2 \to 1$ and $3 \to 1$ quantum random access codes assisted with two correlated qubits.
These special cases are of particular interest because they may be concatenated to generate more general $n \rightarrow 1$ quantum codes (see Ref.~\cite{pawlowski2010} for a detailed discussion of this procedure).
After introducing the notation and essential concepts, we present detailed protocols and study their efficiency when assisted with Bell diagonal states.

Throughout the rest of the paper we employ the Bloch representation of qubit states and measurements, i.e. the three dimensional vector $\vec{s}$ represents the qubit state
$\rho(\vec{s})=\left(\hat 1+\vec{s}\cdot\vec{\sigma}\right)/2$, where $\vec{\sigma}\equiv (\sigma_1,\sigma_2,\sigma_3)$ is the vector of Pauli matrices $\sigma_x,\sigma_y, \sigma_z$.
A unit vector $\hat{\alpha}$ represents an ideal measurement with the probability of obtaining a measurement outcome $\alpha=0,1$, when measured on the state $\rho$, being $\mathrm{Tr}\left(\frac{1+(-1)^\alpha\hat{\alpha}\cdot\vec{\sigma}}{2}\rho\right)$.

A general two-qubit state is of the form $\rho_{ab}=\frac{1}{4}(\hat 1\otimes\hat 1+\vec{a}_0\cdot\vec{\sigma}\otimes\hat 1
+\hat 1\otimes\vec{b}_0\cdot\vec{\sigma}+\sum_{l,m=1}^{3}E_{ij}\sigma_l\otimes\sigma_m)$, where $\vec{a}_0$ and $\vec{b}_0$ are the local Bloch vectors of Alice and Bob, respectively. The matrix $E$ is the correlation matrix, and can always be made diagonal by an appropriate choice of local bases~\cite{horodecki1996b}. We therefore assume, without loss of generality, that the reference frames are appropriately chosen such that  $E=\mathrm{diag}(E_1,E_2,E_3)$. If $E_i\neq0$ we say that the state is correlated along that axis. We also make use of the fact that if Alice performs a measurement $\hat{\alpha}$ with outcome $\alpha$ on her half of the system, then Bob's post-measurement Bloch vector is:
\begin{equation} \label{eq:bstate}
    \vec{b}(\alpha)=\frac{\vec{b}_0+(-1)^{\alpha}E^T \hat{\alpha}}   {1+(-1)^{\alpha}\hat{\alpha}\cdot\vec{a}_0},
\end{equation}
where $E^T$ is the transposed matrix $E$ (not necessary if $E$ is already diagonal).

We shall explore the relationship between our protocols and a class of quantum correlations referred to as quantum discord~\cite{henderson2001,ollivier2001,celeri2011,modi2012}.
Specifically, we employ the normalized geometric measure of quantum discord \cite{dakic2010}.
A general zero-discord state has the form $\sigma_{ab}=p_0\rho_0\otimes|0\rangle\langle0|+p_1\rho_1\otimes|1\rangle\langle1|$, and the normalized geometric discord of $\rho_{ab}$ is defined to be~\cite{dakic2012}: $D_{a|b}^2(\rho_{ab})\equiv 2\mathrm{Min}_{\sigma}\mathrm{Tr}(\rho_{ab}-\sigma_{ab})^2$. For Bell diagonal states we have $D_{a|b}^2 = \frac{1}{2}(E_2^2 + E_3^2)$, where it is assumed that $E_1^2$ is the biggest among squared diagonal elements of $E$.


\subsection{$3\rightarrow1$ code}

The codes presented here are similar to the codes assisted with quantum entanglement~\cite{pawlowski2010},
with the key difference in the choice of Alice's measurements.
We focus first on the class of Bell diagonal states $\rho_{ab}$ correlated along all three axes $x$, $y$ and $z$:
\begin{equation} \label{eq:mmmargin}
    \rho_{ab}=\frac{1}{4}\left(\hat 1\otimes\hat 1+\sum_{i=1}^3E_{l}\sigma_l\otimes\sigma_l\right),
\end{equation}
though the presented protocols can give better than classical results for more general assisting states
(e.g. it will be easy to verify that $\vec a_0$ can be arbitrary).
The protocol is as follows:
\begin{enumerate}
 \item[(i)] For input $x$, Alice performs the measurement characterised by the Bloch vector $\hat {\alpha}(x) = \vec {\alpha}(x) / |\vec {\alpha}(x)|$, where
 $\vec {\alpha}(x) = (\frac{(-1)^{x_1}}{E_1},\frac{(-1)^{x_2}}{E_2},\frac{(-1)^{x_3}}{E_3})$,
    \item[(ii)] Alice sends her measurement outcome $c = \alpha$ to Bob,
    \item[(iii)] To guess the $i$th bit of Alice, Bob measures along $\sigma_i$, obtains the outcome $\beta_i$, and puts $\beta_i \oplus c$ as the guess.
\end{enumerate}
To grasp the mechanism of this protocol, note that depending on the input, $x$, Alice's measurement vectors point towards the vertices of a cuboid embedded in the Bloch sphere (see Fig.~(\ref{spheres}a)).
As a result of her measurement (with outcome $\alpha$) and the correlations in the shared state, the post-measurement local Bloch vector on Bob's side,
$\vec{b}(\alpha) = (-1)^\alpha ((-1)^{x_1},(-1)^{x_2},(-1)^{x_3})/|\vec \alpha|$, points towards one of the vertices of an inner cube within the Bloch sphere (see Fig.~(\ref{spheres}b)).
The $x$, $y$ and $z$ axes correspond to the direction of Bob's measurement, depending on whether he is guessing the first, second or third bit respectively.
Depending on her measurement outcome, Alice knows that Bob's post-measurement Bloch vector is either pointing towards the vertex of the cube encoding $x$ or the vertex directly opposite across the origin,
encoding $\bar x$ with all individual bits flipped.
Therefore, Alice sends a message to Bob to either flip his guess or not to flip it.
Note that the inner product of Bob's measurement vectors (along the axes) with a vector pointing to any vertex is the same, up to a sign.
The probability of correct guess of every individual bit is therefore the same, giving $P_{\min}$, and for Bell diagonal states it is equal to
\begin{equation}
P_{\min} = \frac{1}{2}\left(1 + \frac{1}{\sqrt{E_1^{-2} + E_2^{-2} + E_3^{-2}}} \right).
\label{3TO1}
\end{equation}
Since $P_{\min} > \frac{1}{2}$, this quantum code is thus more efficient than the best classical code (see Th.~\ref{TH_MAX_MIXED} and note that Bell diagonal states have maximally mixed marginals).

\begin{figure}
  \centering
  \includegraphics[scale=0.4]{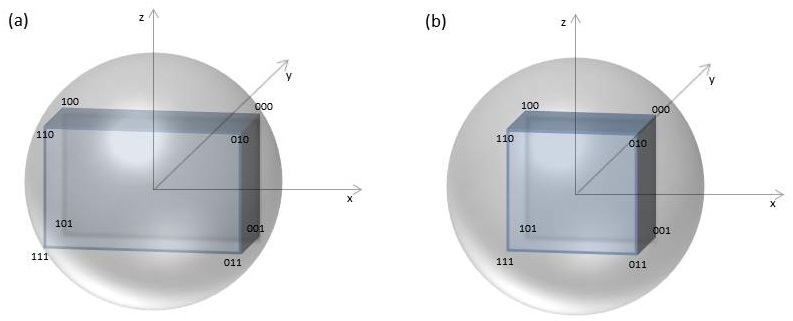}
  \caption{The quantum $3\rightarrow 1$ random access code assisted with two qubits. (a) Alice's measurement vectors point towards the vertices of a cuboid embedded within the Bloch sphere.
  The cuboid is defined by the correlations of the shared quantum state.
  (b) Bob's post-measurement Bloch vectors point towards the vertices of an inner cube centered at the origin. See main text for explanation how the code works.}\label{spheres}
\end{figure}


\subsection{$2\rightarrow1$ code}

This code can operate on a slightly broader class of states as we now allow $E_3$ to vanish.
The protocol follows the same procedures as in the $3\rightarrow 1$ case, with the exception that Alice's measurements are given by $\vec {\alpha}(x) = (\frac{(-1)^{x_1}}{E_1},\frac{(-1)^{x_2}}{E_2},0)$,
the efficiency of this quantum code can then be verified to be
\begin{equation}
P_{\min} = \frac{1}{2}\left(1 + \frac{1}{\sqrt{E_1^{-2} + E_2^{-2}}} \right),
\label{2TO1}
\end{equation}
which is again better than the best classical protocol using bits with maximally mixed marginals (see Th.~\ref{TH_MAX_MIXED}).
Even if Alice and Bob were allowed to share more generally correlated classical bits, for which $P_{\min}$ may be as high as $\frac{2}{3}$ for the $2 \rightarrow 1$ case, the above code may nonetheless still outperform the best possible classical RACs so long as the assisting qubits are sufficiently strongly correlated.
It turns out that entanglement is not a necessary prerequisite to present such a quantum advantage.
We demonstrate this with concrete examples in the following section.


\section{Examples}

Consider Werner states, belonging to the class of Bell diagonal states and given by the mixture of white noise and maximally entangled state~\cite{werner1989}:
\begin{equation}
        \rho_{ab}=(1-q)\frac{\hat 1\otimes\hat 1}{4}+q|\psi\rangle\langle\psi|, \quad q\in[0,1].
\end{equation}
The state is entangled for $q>\frac{1}{3}$ and is separable otherwise.
Its geometric discord can easily be verified to be $D_{a|b}= q$ \cite{dakic2010}.
Since for the Werner states all $E_i = \pm q$, Eqs.~(\ref{3TO1}) and (\ref{2TO1}) reveal that the geometric discord directly measures the efficiency of the $n \to 1$ quantum codes
assisted with this class of states, $P_{\min} = \frac{1}{2}(1+ \frac{D_{a|b}}{\sqrt{n}})$ for $n=2,3$.
Moreover, it is the presence of quantum discord in the assisting states that empowers the quantum advantage.

The same statement likely holds for more general codes.
For example, concatenating $2 \to 1$ code assisted by the Werner state as in Ref.~\cite{pawlowski2010} one finds that the efficiency of $2^m \to 1$ code is given by
\begin{equation}
P_{\min} = \frac{1}{2} \left(1 +\left( \frac{D_{a|b}}{\sqrt{2}} \right)^{m} \right).
\label{PMIN_WERNER}
\end{equation}
The concatenation of the quantum codes requires $2^m - 1$ pairs of qubits in the Werner state and a fair comparison with the classical case is then made by replacing the qubit pairs with correlated bits that have maximally mixed marginals.
Numerical simulations indicate that $4\rightarrow 1$ classical RACs formed through the concatenation procedure cannot achieve $P_{\min} > \frac{1}{2}$.
We conjecture in general that the concatenation of $2 \rightarrow 1$ classical RACs assisted with bits having maximally mixed marginals cannot give $P_{\min} > \frac{1}{2}$,
and therefore that the quantum advantage is present for any $m$, as indicated in Eq.~(\ref{PMIN_WERNER}).

We now show that a separable state may be used to outperform the best classical code assisted with two correlated random bits.
The example once again utilises Bell diagonal states.
Recall that the classical bound is $P_{\mathrm{cl}}^{2 \to 1} = \frac{2}{3} \approx 0.667$ for all classical $2 \to 1$ RACs, and $P_{\mathrm{cl}}^{3 \to 1} = \frac{1}{2}$ for all classical $3 \to 1$ RACs.
By optimising the efficiency of the $2 \to 1$ quantum code, see Eq.~(\ref{2TO1}), over the separable Bell diagonal states one finds that the optimal state has $E_1=E_2=\frac{1}{2}$ and $E_3=0$, which gives the efficiency $P_{\min} = \frac{1}{2}(1+\frac{1}{2\sqrt{2}}) \approx 0.677$, slightly above the classical bound.
Better results are obtained for the $3 \to 1$ quantum code.
By optimising Eq.~(\ref{3TO1}) over separable Bell diagonal states, the best state has $E_1=E_2=E_3=\frac{1}{3}$ and the efficiency is $P_{\min} = \frac{1}{2}(1+\frac{1}{3\sqrt{3}}) \approx 0.596$, considerably above the classical bound.
Note that there may exist a quantum code achieving better efficiencies, utilising some other class of separable states or following a different procedure.

In the last example we show that separable states can outperform some entangled states.
We have already demonstrated that using a separable state the $2 \to 1$ quantum code may achieve efficiencies of at least $P_{\min} = \frac{1}{2}(1+\frac{1}{2\sqrt{2}})$.
Comparing this with Eq.~(\ref{PMIN_WERNER}), one can see that it outperforms the protocol assisted with the entangled Werner states for $\frac{1}{3} < q < \frac{1}{2}$.
It remains to show that there is no better quantum protocol for $2 \to 1$ quantum code assisted with the Werner states.
This follows from the optimality of the protocol for the maximally entangled state $|\psi\rangle$ shown in Refs.~\cite{pawlowski2010,ambainis2002},
the fact that the completely mixed state encodes local randomness giving at most $P_{\min} = \frac{1}{2}$, and that the Werner state is a mixture of these two states.

Although quantum discord empowers quantum advantage in our examples and is proportional to the efficiency of the protocol for fixed classes of states
it should be noted that the amount of the geometric quantum discord for different classes of states is not an indicator of the usefulness of the states for quantum random access codes.
Namely, our optimal separable state for $2 \to 1$ code has discord $D_{\mathrm{sep}} = \frac{1}{2 \sqrt{2}} \approx 0.354$ that corresponds to $P_{\min} = \frac{1}{2}(1 + D_{\mathrm{sep}})$,
but Werner states have $D_{\mathrm{Wer}} = q$ and the corresponding $P_{\min} = \frac{1}{2}(1 + \frac{D_{\mathrm{Wer}}}{\sqrt{2}})$.
Therefore, Werner states containing more discord than the separable state, i.e. $D_{\mathrm{Wer}} \in (\frac{1}{2 \sqrt{2}}, \frac{1}{2})$,  still give worse $P_{\min}$ than the separable state.
The precise physical quantity that is a resource for better quantum codes is at present unknown.

Finally we would like to comment on a variation of quantum random access codes that allows Alice to send a qubit to Bob in place of both of them sharing correlated qubits.
Results presented here may suggest that Alice should be able to send Bob noisy states (as opposed to pure states) and still be able to beat the classical limit.
This is indeed the case as we will briefly explain for the $2 \to 1$ code.
The classical limit is known to be $\frac{1}{2}$ if we allow for uncorrelated local randomness~\cite{ambainis2008}.
The optimal quantum protocol that beats this bound encodes the input $x$ into pure quantum states $\ket{\psi_x}$ with Bloch vectors $\vec \psi_x = ((-1)^{x_1}, (-1)^{x_2}, 0)$.
If Bob now measures along $x$ ($y$) axis in order to read the first (second) bit his worst case probability of correct guess is $\frac{1}{2}(1 + \frac{1}{\sqrt{2}})$.
Suppose we perform the same measurements on white noise $\frac{1}{2} \hat 1$. The outcomes of the measurements are completely random, which is enough to give as good a result as the best classical protocol in the worst case.
If Alice's encoding is in the form of a mixed state $q \proj{\psi_x} + (1-q) \frac{1}{2} \hat 1$,
its corresponding Bloch vector is $q \vec \psi_x$. Applying the same protocol, we find that $P_{\min} = \frac{1}{2}(1 + \frac{q}{\sqrt{2}})$,
which is always better than classical except for the completely mixed state of $q=0$.


\section{Conclusions}

We demonstrated that separable states are a useful resource in random access codes as soon as finite shared randomness in the quantum and classical protocols is counted in the same way, i.e. bits are replaced with qubits.
This is in particular relevant if randomness is not a freely available resource.


We hope the example given here opens a research avenue on efficiency of solutions to various problems in the presence of finite randomness.
This is of both practical and fundamental interest.
On the practical side, computers can use only a finite and restricted set of random bits for computations
and therefore separable states are likely to enlarge the class of states that allows quantum advantages once these restrictions are taken into account.
On the fundamental side, it would be interesting to know if entanglement is necessary to demonstrate in a Bell-like scenario deviations from predictions of local hidden variable models that involve only a finite number of bits.

\ack

We thank Kavan Modi for discussions.
This work is supported by the National Research Foundation, the Ministry of Education of Singapore grant no RG98/13, and start-up grant of the Nanyang Technological University.

\section{References}

\bibliographystyle{unsrt}
\bibliography{references}

\begin{thebibliography}{10}

\bibitem{PhysRevLett.67.661}
A.~K. Ekert.
\newblock Quantum cryptography based on bell's theorem.
\newblock {\em Phys Rev. Lett.}, 67:661--665, 1991.

\bibitem{buhrman2010}
H.~Buhrman, R.~Cleve, S.~Massar, and R.~de~Wolf.
\newblock {\em Rev. Mod. Phys.}, 82(1):665--698, 2010.

\bibitem{PhysRevLett.102.050502}
J.~Anders and D.~E. Browne.
\newblock Computational power of correlations.
\newblock {\em Phys Rev. Lett.}, 102:050502, 2009.

\bibitem{wiesner1983}
Stephen Wiesner.
\newblock Conjugate coding.
\newblock {\em ACM Sigact News}, 15(1):78--88, 1983.

\bibitem{ambainis1999}
A.~Ambainis.
\newblock Dense quantum coding and a lower bound for 1-way quantum automata.
\newblock In {\em Proc. ACM Symp. on Theory of Computing}, 1999.

\bibitem{ambainis2002}
Andris Ambainis, Ashwin Nayak, Amnon Ta-Shma, and Umesh Vazirani.
\newblock Dense quantum coding and quantum finite automata.
\newblock {\em J. ACM}, 49(4):496--511, 2002.

\bibitem{PhysRevLett.88.237901}
S.~Lloyd.
\newblock Computational capacity of the universe.
\newblock {\em Phys Rev. Lett.}, 88:237901, 2002.

\bibitem{henderson2001}
Leah Henderson and Vlatko Vedral.
\newblock Classical, quantum and total correlations.
\newblock {\em J. Phys. A}, 34(35):6899, 2001.

\bibitem{ollivier2001}
Harold Ollivier and Wojciech~H Zurek.
\newblock Quantum discord: a measure of the quantumness of correlations.
\newblock {\em Phys. Rev. Lett.}, 88(1):017901, 2001.

\bibitem{celeri2011}
L.~C. C{\'e}leri, J.~Maziero, and R.~M. Serra.
\newblock Theoretical and experimental aspects of quantum discord and related
  measures.
\newblock {\em Int. J. Quant. Inf.}, 9:1837, 2011.

\bibitem{modi2012}
Kavan Modi, Aharon Brodutch, Hugo Cable, Tomasz Paterek, and Vlatko Vedral.
\newblock The classical-quantum boundary for correlations: discord and related
  measures.
\newblock {\em Rev. Mod. Phys.}, 84(4):1655, 2012.

\bibitem{cavalcanti2011}
Daniel Cavalcanti, Leandro Aolita, Sergio Boixo, Kavan Modi, Marco Piani, and
  Andreas Winter.
\newblock Operational interpretations of quantum discord.
\newblock {\em Phys. Rev. A}, 83(3):032324, 2011.

\bibitem{madhok2011}
Vaibhav Madhok and Animesh Datta.
\newblock Interpreting quantum discord through quantum state merging.
\newblock {\em Phys. Rev. A}, 83(3):032323, 2011.

\bibitem{streltsov2011}
A.~Streltsov, H.~Kampermann, and D.~Bru{\ss}.
\newblock {\em Phys Rev. Lett.}, 106:160401, 2011.

\bibitem{piani2011}
M.~Piani, S.~Gharibian, G.~Adesso, J.~Calsamiglia, P.~Horodecki, and A.~Winter.
\newblock {\em Phys Rev. Lett.}, 106:220403, 2011.

\bibitem{streltsov2012}
A.~Streltsov, H.~Kampermann, and D.~Bru{\ss}.
\newblock {\em Phys. Rev. Lett.}, 108:250501, 2012.

\bibitem{chuan2012}
T.~K. Chuan, J.~Maillard, K.~Modi, T.~Paterek, M.~Paternostro, and M.~Piani.
\newblock Quantum discord bounds the amount of distributed entanglement.
\newblock {\em Phys. Rev. Lett.}, 109:070501, 2012.

\bibitem{madhok2013}
V.~Madhok and A.~Datta.
\newblock Quantum discord in quantum information theory - from strong
  subadditivity to the mother protocol.
\newblock {\em arXiv:1107.0994}, 2011.

\bibitem{boixo2011}
S.~Boixo, L.~Aolita, D.~Cavalcanti, K.~Modi, and A.~Winter.
\newblock {\em Int. J. Quant. Inf.}, 9:1643--1651, 2011.

\bibitem{dakic2012}
Borivoje Daki{\'c}, Yannick~Ole Lipp, Xiaosong Ma, Martin Ringbauer, Sebastian
  Kropatschek, Stefanie Barz, Tomasz Paterek, Vlatko Vedral, Anton Zeilinger,
  {\v{C}}aslav Brukner, et~al.
\newblock Quantum discord as resource for remote state preparation.
\newblock {\em Nat. Phys.}, 8(9):666--670, 2012.

\bibitem{horodecki2013}
P.~Horodecki, J.~Tuziemski, P.~Mazurek, and R.~Horodecki.
\newblock Can communication power of separable correlations exceed that of
  entanglement resource?
\newblock {\em Phys Rev. Lett.}, 112:140507, 2013.

\bibitem{versteeg2009}
G.~Ver Steeg and S.~Wehner.
\newblock Relaxed uncertainty relations and information processing.
\newblock {\em Quantum Inf. Comput.}, 9:0801--0832, 2009.

\bibitem{grudka2013}
A.~Grudka, K.~Horodecki, M.~Horodecki, W.~K{\l}obus, and M.~Paw{\l}owski.
\newblock On equivalence between popescu-rohrlich boxes and random access
  codes.
\newblock {\em arXiv:1307.7904}, 2013.

\bibitem{ic}
M.~Paw{\l}owski, T.~Paterek, D.~Kaszlikowski, V.~Scarani, A.~Winter, and
  M.~\.Zukowski.
\newblock Information causality as a physical principle.
\newblock {\em Nature}, 461:1101--1104, 2009.

\bibitem{klauck2001}
Hartmut Klauck.
\newblock Lower bounds for quantum communication complexity.
\newblock In {\em Proc. IEEE Symp. on Foundations of Computer Science}, pages
  288--297. IEEE, 2001.

\bibitem{hayashi2006}
Masahito Hayashi, Kazuo Iwama, Harumichi Nishimura, Rudy Raymond, and Shigeru
  Yamashita.
\newblock (4, 1)-quantum random access coding does not exist---one qubit is not
  enough to recover one of four bits.
\newblock {\em New J. Phys.}, 8(8):129, 2006.

\bibitem{li2012}
Hong-Wei Li, Marcin Paw{\l}owski, Zhen-Qiang Yin, Guang-Can Guo, and Zheng-Fu
  Han.
\newblock Semi-device-independent randomness certification using n→ 1 quantum
  random access codes.
\newblock {\em Phys. Rev. A}, 85(5):052308, 2012.

\bibitem{spekkens2009}
R.~W. Spekkens, D.~H. Buzacott, A.~J. Keehn, B.~Toner, and G.~J. Pryde.
\newblock Preparation contextuality powers parity-oblivious multiplexing.
\newblock {\em Phys. Rev. Lett.}, 102:010401, 2009.

\bibitem{pawlowski2010}
Marcin Paw{\l}owski and Marek {\.Z}ukowski.
\newblock Entanglement-assisted random access codes.
\newblock {\em Phys. Rev. A}, 81(4):042326, 2010.

\bibitem{yao2012}
Yao Yao, Hong-Wei Li, Xu-Bo Zou, Jing-Zheng Huang, Chun-Mei Zhang, Zhen-Qiang
  Yin, Wei Chen, Guang-Can Guo, and Zheng-Fu Han.
\newblock Quantum discord in quantum random access codes and its connection to
  dimension witnesses.
\newblock {\em Phys. Rev. A}, 86(6):062310, 2012.

\bibitem{ambainis2008}
Andris Ambainis, Debbie Leung, Laura Mancinska, and Maris Ozols.
\newblock Quantum random access codes with shared randomness.
\newblock {\em arXiv:0810.2937}, 2008.

\bibitem{werner1989}
Reinhard~F Werner.
\newblock Quantum states with einstein-podolsky-rosen correlations admitting a
  hidden-variable model.
\newblock {\em Phys. Rev. A}, 40(8):4277--4281, 1989.

\bibitem{PhysRevA.73.062105}
A.~Acin, N.~Gisin, and B.~Toner.
\newblock {\em Phys. Rev. A}, 73:062105, 2006.

\bibitem{almeida2007}
M.~L. Almeida, S.~Pironio, J.~Barrett, G.~Toth, and A.~Acin.
\newblock Noise robustness of the nonlocality of entangled quantum states.
\newblock {\em Phys Rev. Lett.}, 99:040403, 2007.

\bibitem{PhysRevLett.111.160402}
F.~Hirsch, M.~Tulio Quintino, J.~Bowles, and N.~Brunner.
\newblock {\em Phys Rev. Lett.}, 111:160402, 2013.

\bibitem{horodecki1996b}
R~Horodecki and M~Horodecki.
\newblock Information-theoretic aspects of inseparability of mixed states.
\newblock {\em Phys. Rev. A}, 54(3):1838, 1996.

\bibitem{dakic2010}
Borivoje Daki{\'c}, Vlatko Vedral, and {\v{C}}aslav Brukner.
\newblock Necessary and sufficient condition for nonzero quantum discord.
\newblock {\em Phys Rev. Lett.}, 105(19):190502, 2010.

\end{thebibliography}

\end{document}